%
%
%

 \pdfoutput=1
%
%

%
\RequirePackage{fix-cm}
\documentclass[smallextended]{svjour3}       
\smartqed
\usepackage[utf8]{inputenc}
\usepackage{amsmath,amsfonts,amssymb,enumerate,multicol,nccmath}
\usepackage{tikz}
\usetikzlibrary{quantikz}
\usepackage{multirow}
\usepackage{graphicx,booktabs}
\usepackage{siunitx}
\usepackage{tabularx}
\usepackage{pifont}
\usepackage{blindtext}
\usepackage{cite}
\usepackage{pdflscape}
\makeatletter
\DeclareMathOperator*{\Motimes}{\text{\raisebox{0.25ex}{\scalebox{0.8}{$\bigotimes$}}}}
\renewcommand{\fnum@figure}{Fig. \thefigure}
\newcommand{\xmark}{\text{\ding{55}}}
\makeatother

%
%
%
%
 \journalname{}
\begin{document}

\title{A novel three party Quantum secret sharing scheme based on Bell state sequential measurements with application in quantum image sharing}

\titlerunning{Quantum secret sharing }        

\author{Farhan Musanna        \and
        Sanjeev Kumar$^*$ 
}


\institute{Farhan Musanna \at
              Department of Mathematics, IIT Roorkee \\
              \email{fmusanna@ma.iitr.ac.in}           
           \and
           Corresponding Author: Sanjeev  Kumar \at
              Department of Mathematics, IIT Roorkee \\
              \email{malikfma@iitr.ac.in}
}
\date{Received: date / Accepted: date}

\maketitle

\begin{abstract}
	In this work, we present a quantum secret sharing scheme based on Bell state entanglement and sequential projection measurements. The protocol verifies the $n$ out of $n$ scheme and supports the aborting of the protocol in case all the parties do not divulge in their valid measurement outcomes. The operator-qubit pair forms an integral part of the scheme determining the classical secret to be shared. The protocol is robust enough to neutralize  any eavesdropping on a particular qubit of the dealer. The experimental demonstration of the scheme is done on IBM-QE cloud platform with backends \texttt{IBMQ\_16\_Melbourne} and \texttt{IBMQ\_QASM\_SIMULATOR\_V0.1.547} simulator. The security analysis performed on the scheme and the comparative analysis supports our claim of a stringent and an efficient scheme as compared to some  recent quantum and semi-quantum techniques of secret sharing.
	\keywords{Quantum computing \and Quantum Fourier Transform \and Quantum Pauli Operators \and Secret sharing }
\end{abstract}
\section{Introduction}
   Nowadays, technology has advanced by leaps and bounds. The world at present has seen a paradigm shift where technology is not an alternative but a necessity. Ranging from transferring high profile date content across the globe, making hefty payments, signing important documents, drawing a blueprint of a city, technology has come to the rescue. Therefore, people do not tend to keep their secrets in one place. Instead, they try to distribute them among many parties. This act of sharing a secret among many parties is known as secret sharing in the cryptographic parlance. It emerged as a technique to limit the power of an individual in having sole information about the secret. The widespread popularity of this technique was due to the seminal work by Shamir et al. in \cite{shamir1979share}. This formal idea of sharing information between parties caught the researchers' attention, and active research started in this direction.
   \section{Existing Works}
   In the last couple of decades, the researchers have gone a step ahead and conceived computations on a quantum computer. The results of this rigorous pursuit of exploring an entirely new paradigm are the main motivations behind the development of breakthrough algorithms like Shor's algorithm \cite{shor1994algorithms}, Deutsche algorithm \cite{deutsch1992rapid}, Grover search algorithm \cite{grover1997quantum}, No-cloning \cite{buvzek1996quantum} and Holevo theorems \cite{holevo1973bounds}. The realm of quantum cryptography covers a diverse range of concepts like Quantum Key Distribution (QKD) \cite{bennett2014quantum,ekert1991quantum,bennett1992quantum}, Identification systems \cite{duvsek1999quantum}, Quantum secure communications \cite{bostrom2002deterministic,deng2003two}, Quantum digital signature \cite{zeng2002arbitrated,lee2004arbitrated,li2009arbitrated}. In this context, quantum secret sharing is considered an important problem in the area of secure communications.
   \par The work proposed by Hillery et al. \cite{hillery1999quantum} provides a foundation for quantum secret sharing. They relieved the applicability of the entanglement property of the Greenberger-Horne-Zeilinger (GHZ) state to create shares for different colluding parties \cite{greenberger1989bell}. Their work primarily focused on the right choice of measurement basis used by the parties for secret reconstruction. The protocol could also detect intrusion by any malicious party using the properties of the GHZ state. Later, Karlsson et al. \cite{karlsson1999quantum} proposed a $(m,n)$ scheme based on the multi-particle entanglement measurements. One distinct feature of this scheme was the relaxation that $m\leq n$, i.e., even if $m$ parties out of the total $n$ agreed to collude, the secret would be constructed. One of the highlights of their scheme was the use of non-orthogonal entangled states to detect an eves dropper and prevent the integrity loss of the data. Xiao et al. \cite{xiao2004efficient} gave another refinement of the pioneer schemes by introducing the concept of Quantum Key Distribution (QKD) to increase the efficiency of the scheme.  The work by Zhang et al. \cite{zhang2005multiparty} used similar concepts of unitary transforms and multiparty measurement outcomes to reveal the secret without using entanglement.  Guo et al. \cite{guo2003quantum} implemented a secret sharing scheme independent of the entangled state by directly encoding the photon of QKD, thereby enhancing computational efficiency.  Some other works have been reported in the literature that ensures provably secure schemes for quantum secret sharing \cite{gottesman2000theory,tittel2001experimental,deng2005improving}. Experimental quantum sharing was reported by Schmidt et al. in \cite{schmid2005experimental}. The scheme had a back-door eavesdropping strategy, which could lead to data manipulations.

   Apart from the earlier described protocols, various schemes made use of the positioning of the photons and Bell basis measurement. Zhang and Man in \cite{zhang2005multiparty1} adopted the positioning-based scheme and devised an efficient scheme for classical secret sharing. Markham and Sanders \cite{markham2008graph} used a graph state approach for devising a $(3,5)$ sharing scheme. Hsu \cite{hsu2003quantum} proposed another novel idea in this direction, which harnessed the Grover search algorithm for secret sharing. Fortescue and Gour \cite{fortescue2012reducing} proposed an amalgamation of classical and quantum encryption to design a novel perfect quantum secret sharing based on imperfect ramp sharing. The novelty of this scheme lay in the reducing the communication cost of secret sharing based on this hybridization. They discussed the important theoretical aspects of the size of the shares and the number of participants required to reconstruct the secret. Mitra et al. \cite{maitra2015proposal} proposed a novel idea for a rational quantum secret sharing scheme in which they gave the theoretical concepts of utility-strategy based on Nash equilibrium, which restricts the scenario wherein an individual participant tries to reconstruct the secret exclusively and does not divulge in its share. A very recent quantum secret sharing was proposed in \cite{qin2020hierarchical} harnessed the properties of a higher-dimensional entangled state that facilitated the dynamic nature of the number of participants and the number of shares each participant holds in the scheme. 
   \par There are hybrid schemes that are an amalgam of classical and quantum techniques used in secret sharing. The scheme reported by Yang et al. in \cite{yang2013secret} used Quantum Fourier Transform (QFT) as the basis for generating the secret shares. Qin et al. \cite{qin2018multi} used QFT and generalized Pauli operators to design the shares for each participant and validated their scheme's correctness. The scheme presented by Xiao \cite{xiao2013multi} used QFT, generalized Pauli operator, and $(n+1)$ GHZ states to share the secret, requiring every party to do the QFT for recovering the secret, which could be practically inefficient as the number of participants increase. Song et al. \cite{song2017t} implemented the QFT, Pauli operators and C-Not gates for secret sharing of classical information using $d$-level quantum states. The scheme lacked some basic features like the recoverability of the secret and the theoretical aspect of QFT. Mashhadi \cite{mashhadi2019general} proposed a scheme that was primarily based on the Monotone Span Program to generate an Access structure and generate shares. The quantum part of the schemes was exclusive to the application of QFT, Pauli operators, and individual measurements of parties. Some of the generic steps in the schemes like\cite{yang2013secret,qin2018multi,xiao2013multi,song2017t,mashhadi2019general} are summarized as follows:
   \begin{enumerate}
   	\item Generate shares $s_1,s_2,....s_n$ exclusively by classical schemes like MSP, Lagrange interpolation etc.
   	\item Initialize a qubit and apply the QFT to get $\ket{\phi}=\sum_{x=0}^{d-1}\ket{x}\ket{0}....\ket{0}$.
   	\item Apply a C-NOT gate to get  $\ket{\phi'}=\sum_{x=0}^{d-1}\ket{x}\ket{x}....\ket{x}$.
   	\item Apply the Pauli operator $U_{s_i,\beta}=\sum_{j=0}^{d-1}\omega^{j\beta}\ket{j+s_i}\bra{j}$ on $\ket{\phi'}$.
   	\item Obtain the state: 
   	
   	$\ket{A}=\sum_{j_1,j_2..j_n=0}^{d-1} \ket{j_1+s_1} \ket{j_2+s_2}.... \ket{j_3+s_3}$.
   	\item Each party applies its own individual measurement to get their shares $(s_1,s_2,....,s_n)$ and use it to construct the secret $s$.
   \end{enumerate}
   \subsection{Motivations and Novelty}
   \par The manifestation of this article is a result of the effort put in to propose a secure quantum cryptosystem without the use of classical means whose incorporation in the scheme would have diluted the whole purpose. Some of the salient features of the proposed scheme are given as follows:
   \begin{itemize}
   	\item The dealer $D$ generates the Einstein Podolsky Rosen (EPR) pair. The state of EPR pair is known only to its generator, which increases the security of the protocol, as shown in subsequent sections.
   	\item The operations made by the dealer $D$ on the position of the qubits determines the classical secret to be shared. This secret sharing aspect was predominantly missing in the earlier schemes, where only the operator determined the secret.
   	\item The operated qubit is disclosed by the $D$ only after successive measurements by the parties.
   	\item The collapsed states after measurement of each party makes any outcome equally likely, thereby requiring mandatory participation of each one in the protocol.
   	\item There is a check on the EPR state against any eavesdropping. Therefore, reconstruction of the secret is not possible if there are any manipulations by the adversary.
   \end{itemize}
   \section{Proposed Secret Sharing Protocol}
   The proposed protocol describes a $(3,3)$ quantum secret sharing scheme. However,it can be generalized to a $(n,n)$ scheme, where all the share-holders need to collude for reconstructing the original secret $s$. This section is consisting of design and implementation of protocol along with the secret reconstruction strategies of the proposed scheme.
   \subsection{Protocol Design}
   We utilize the single qubit gates $\mathbb{I},\mathbb{Z},\mathbb{Y},\mathbb{X}$ to design our protocol. We consider that the participants in this scheme are $P_1,P_2,P_3$. Apart from these participant, we have a dealer $D$ who wishes to share the secret $s$ among these participants as a sequence of qubits. The dealer $D$ creates a product state with one of the possible Bell states out of the four states $\ket{\alpha^+}, \ket{\alpha^-},\ket{\beta^+},\ket{\beta^-}$. We assume the hypothesis that the dealer $D$ some time back has already shared a Bell state \begin{scriptsize}\begin{equation}\label{Bell}
   	\ket{\psi}=\left(\dfrac{\ket{0}_1\ket{0}_2+ \ket{1}_1\ket{1}_2}{\sqrt{2}}\right) \left(\dfrac{\ket{0}_3\ket{0}_4+ \ket{1}_3\ket{1}_4}{\sqrt{2}}\right)\left(\dfrac{\ket{0}_5\ket{0}_6+ \ket{1}_5\ket{1}_6}{\sqrt{2}}\right)
   	\end{equation}\end{scriptsize} with all the three participants. The dealer allocates a pair of qubits to each participant. Assume that $P_1$ gets qubits $1$ and $4$, $P_2$ gets qubits $2$ and $6$, $P_3$ gets qubits $3$ and $5$. This allocation can be any other combination provided each party has a share of the other entangled particle. The dealer $D$ has with him a set of four unitary operators $\mathbb{I}, \mathbb{X}, i\mathbb{Y},\mathbb{Z}$. Also, associated with each of these operators, a string of bits given by the following correspondence: \begin{eqnarray}\label{secret}
   \mathbb{I}_1 &=& 00, \mathbb{I}_4 = 11 \nonumber \\
   \mathbb{X}_1 &=& 01, \mathbb{X}_4 = 10 \nonumber\\
   i\mathbb{Y}_1 &=& 11, i\mathbb{Y}_4 = 00 \nonumber\\
   \mathbb{Z}_1&=& 10, \mathbb{Z}_4 = 01
   \end{eqnarray}
   where, the operator $\mathbb{Z}_1$ means that the operator $\mathbb{Z}$ is acting on qubit $1$, and similarly $\mathbb{Z}_4$ means that $\mathbb{Z}$ is  acting on qubit $4$. The protocol to share the secret $s$ as binary string of length $8$ is shared by repeating the protocol $4$ times. For instance the dealer wants to share $s=55=00110111$, the dealer will actually manipulate the state $\ket{\psi}$ with the operations $ (\mathbb{I}i \mathbb{Y}\mathbb{X}i\mathbb{Y})_1$.
   \subsection{Protocol Implementation}
   The overall implementation of the proposed protocol is carried out in the following nine steps:
   \begin{enumerate}
   	\item The dealer $D$ decides to alter the particles of $P_1$, i.e., $(1,4)$. It also decides to operate the $\mathbb{X}$ gate on particle $1$  thereby resulting in the state as follows: 
   	\begin{scriptsize}\begin{equation}\label{2}
   		\resizebox{1\hsize}{!}{$\ket{\psi'}=\left(\frac{\ket{1}_1\ket{0}_2+ \ket{0}_1\ket{1}_2}{\sqrt{2}}\right) \left(\frac{\ket{0}_3\ket{0}_4+ \ket{1}_3\ket{1}_4}{\sqrt{2}}\right)\left(\frac{\ket{0}_5\ket{0}_6+ \ket{1}_5\ket{1}_6}{\sqrt{2}}\right)$}
   		\end{equation}
   	\end{scriptsize}
   	\item The dealer $D$ gives the state $\ket{\psi'}$ to $P_1$ who decides to measure his share of the entangled particles, i.e., $(1,4)$ in the Bell basis $\ket{\alpha^+}_{14}, \ket{\alpha^-}_{14},\ket{\beta^+}_{14},\ket{\beta^-}_{14}$. The quantum measurement operators involved in $P_1's$ measurements are as follows:\begin{scriptsize}\begin{equation}
   		\begin{aligned}
   		\bra{\alpha^+}_{14}={}& \dfrac{1}{\sqrt 2}\Bigl(\bra{0} \otimes \mathbb{I} \otimes \mathbb{I} \otimes  \bra{0} \otimes \mathbb{I}  \otimes  \mathbb{I} + \bra{1} \otimes \mathbb{I} \otimes \mathbb{I} \otimes \bra{1}  \otimes  \mathbb{I}\otimes \mathbb{I}\,\Bigr)\\
   		\bra{\alpha^-}_{14}=& \dfrac{1}{\sqrt 2}\Bigl(\bra{0} \otimes \mathbb{I} \otimes \mathbb{I} \otimes  \bra{0} \otimes \mathbb{I}  \otimes  \mathbb{I} - \bra{1} \otimes \mathbb{I} \otimes \mathbb{I} \otimes \bra{1}  \otimes  \mathbb{I}\otimes \mathbb{I}\,\Bigr)\\
   		\bra{\beta^+}_{14}=& \dfrac{1}{\sqrt 2}\Bigl(\bra{0} \otimes \mathbb{I} \otimes \mathbb{I} \otimes  \bra{1} \otimes \mathbb{I}  \otimes  \mathbb{I} + \bra{1} \otimes \mathbb{I} \otimes \mathbb{I} \otimes \bra{0}  \otimes  \mathbb{I}\otimes \mathbb{I}\,\Bigr)\\
   		\bra{\beta^-}_{14}=& \dfrac{1}{\sqrt 2}\Bigl(\bra{0} \otimes \mathbb{I} \otimes \mathbb{I} \otimes  \bra{1} \otimes \mathbb{I}  \otimes  \mathbb{I} - \bra{1} \otimes \mathbb{I} \otimes \mathbb{I} \otimes \bra{0}  \otimes  \mathbb{I}\otimes \mathbb{I}\,\Bigr)
   		\end{aligned}
   		\end{equation}
   	\end{scriptsize}
   	Here, $P_1$ measures say for instance the state $\bra{\alpha^+}_{14}$, then the resultant system will collapse into
   	\begin{align}
   	\resizebox{.85\hsize}{!}{$\bra{\alpha^+}_{14}\ket{\psi'}={} \left(\dfrac{\bra{0}_1 \bra{0}_4  + \bra{1}_1\bra{1}_4}{\sqrt 2}\right)\left[\left(\dfrac{\ket{1}_1\ket{0}_2+ \ket{0}_1\ket{1}_2}{\sqrt{2}}\right) \left(\dfrac{\ket{0}_3\ket{0}_4+ \ket{1}_3\ket{1}_4}{\sqrt{2}}\right)\left(\dfrac{\ket{0}_5\ket{0}_6+ \ket{1}_5\ket{1}_6}{\sqrt{2}}\right)\right]$} \\
   	=\frac{1}{2}\left( \frac{\ket{1}_2\ket{0}_3+\ket{0}_2\ket{1}_3}{\sqrt{2}}\right)\left(\frac{\ket{0}_5\ket{0}_6+ \ket{1}_5\ket{1}_6}{\sqrt{2}}\right).\end{align}
   	It can be seen from the above expression, the particles $5^{th}$ and $6^{th}$ are unaltered as expected since the measurement effects qubits $(1,40$ and $(2,3)$ only.
   	\item The state can be read as 
   	$ \ket{\chi}=\dfrac{1}{2}\ket{\beta^+}_{23}\ket{\alpha^+}_{56}$
   	. The qubits held by $P_2$ and $P_3$ are $(2,6)$ and $(3,5)$, respectively. We rearrange these qubits by swapping their places. Since they have control of their qubits, therefore swapping is done to get $\ket{\chi}$ as \begin{equation}\label{Chii}
   	\resizebox{1\hsize}{!}{$                       \ket{\chi}=\dfrac{1}{4}\left(\ket{1}_2\ket{0}_6\ket{0}_3\ket{0}_5 +\ket{1}_2\ket{1}_6\ket{0}_3\ket{1}_5+ \ket{0}_2\ket{0}_6\ket{1}_3\ket{0}_5+ \ket{0}_2\ket{1}_6\ket{1}_3\ket{1}_5    \right)$}
   	\end{equation}
   	\item Using the Bell representation of two qubits states, the state $\ket{\chi}$ can be written as
   	\begin{scriptsize}
   		\begin{equation}\label{rep}
   		\begin{aligned}
   		\ket{\chi}={}&\dfrac{1}{4\sqrt{2}} \Bigg[\left(\ket{\beta^+}_{26}-\ket{\beta^-}_{26}\right)\ket{0}_3\ket{0}_5+ \left(\ket{\alpha^+}_{26}-\ket{\beta^-}_{26}\right)\ket{0}_3\ket{1}_5\\
   		&+\left(\ket{\alpha^+}_{26}+\ket{\alpha^-}_{26}\right)\ket{1}_3\ket{0}_5  +\left(\ket{\beta^+}_{26}+\ket{\beta^-}_{26}\right)\ket{1}_3\ket{1}_5\Bigg]\\
   		&=\dfrac{1}{4}\Bigg[\ket{\beta^+}_{26}\ket{\alpha^+}_{35} -  \ket{\beta^-}_{26}\ket{\alpha^-}_{35}+ \ket{\alpha^+}_{26}\ket{\beta^+}_{35}- \ket{\alpha^-}_{26}\ket{\beta^-}_{35}\Bigg]
   		\end{aligned}
   		\end{equation}
   	\end{scriptsize}
   	\item Since $P_1$ has its disposal on the three other bell basis projective measurement operators  $\bra{\alpha^-}_{14}, \bra{\beta^+}_{14},$ and $\bra{\beta^-}_{14}$, the possible outcomes of these projective measurements collapse the system into the following three possible cases:
   	\begin{scriptsize}
   		\begin{equation}\label{eqq}
   		\begin{aligned}
   		\bra{\alpha^-}_{14}\ket{\psi'}&=\dfrac{1}{4}\Bigg[\ket{\alpha^+}_{26}\ket{\beta^-}_{35} -  \ket{\alpha^-}_{26}\ket{\beta^+}_{35}\\
   		&+ \ket{\beta^+}_{26}\ket{\alpha^-}_{35}
   		- \ket{\beta^-}_{26}\ket{\alpha^+}_{35}\Bigg]\\
   		\bra{\beta^+}_{14}\ket{\psi'}& =\dfrac{1}{4}\Bigg[\ket{\alpha^+}_{26}\ket{\alpha^+}_{35} +  \ket{\alpha^-}_{26}\ket{\alpha^-}_{35}\\
   		&+ \ket{\beta^+}_{26}\ket{\beta^+}_{35}
   		+ \ket{\beta^-}_{26}\ket{\beta^-}_{35}\Bigg]\\
   		\bra{\beta^-}_{14}\ket{\psi'}& =\dfrac{-1}{4}\Bigg[\ket{\alpha^+}_{26}\ket{\alpha^-}_{35} +  \ket{\alpha^-}_{26}\ket{\alpha^+}_{35}\\
   		&+ \ket{\beta^+}_{26}\ket{\beta^-}_{35}
   		+ \ket{\beta^-}_{26}\ket{\beta^+}_{35}\Bigg]
   		\end{aligned}
   		\end{equation}
   	\end{scriptsize}
   	\item Dealer $D$ can make three other operations $\mathbb{I}, i\mathbb{Y},\mathbb{Z}$. For each operation, the respective measurements by $P_1$ and the resultant collapsed state listed in Table \ref{Summary}.
   	\begin{table}[!htbp]
   		\begin{center}
   		\caption{Unitary operation and Corresponding Measurement Results}
   		\begin{tabular}{|c|c|c|}
   			\hline
   			D's Operation $\rightarrow$ & $P_1$'s Outcome & Collapsed State  \\ \hline
   			\multirow{ 4}{*}{$\mathbb{I}$} & $\ket{\alpha^+}_{14}$ & $\dfrac{1}{2}\ket{\alpha^+}_{23}\ket{\alpha^+}_{56}$ \\
   			& $\ket{\alpha^-}_{14}$ & $\dfrac{1}{2}\ket{\alpha^-}_{23}\ket{\alpha^+}_{56}$ \\
   			& $\ket{\beta^+}_{14}$ & $\dfrac{1}{2}\ket{\beta^+}_{23}\ket{\alpha^+}_{56}$ \\
   			& $\ket{\beta^-}_{14}$ & $ \dfrac{1}{2}\ket{\beta^-}_{23}\ket{\alpha^+}_{56}$ \\ \hline
   			\multirow{ 4}{*}{$\mathbb{X}$} & $\ket{\alpha^+}_{14}$ & $\dfrac{1}{2}\ket{\beta^+}_{23}\ket{\alpha^+}_{56}$ \\
   			& $\ket{\alpha^-}_{14}$ & $\dfrac{-1}{2}\ket{\beta^-}_{23}\ket{\alpha^+}_{56}$ \\
   			& $\ket{\beta^+}_{14}$ & $\dfrac{1}{2}\ket{\alpha^+}_{23}\ket{\alpha^+}_{56}$ \\
   			& $\ket{\beta^-}_{14}$ & $ \dfrac{-1}{2}\ket{\alpha^-}_{23}\ket{\alpha^+}_{56}$ \\\hline
   			\multirow{ 4}{*}{$i\mathbb{Y}$} & $\ket{\alpha^+}_{14}$ & $\dfrac{1}{2}\ket{\beta^-}_{23}\ket{\alpha^+}_{56}$ \\
   			& $\ket{\alpha^-}_{14}$ & $\dfrac{-1}{2}\ket{\beta^+}_{23}\ket{\alpha^+}_{56}$ \\
   			& $\ket{\beta^+}_{14}$ & $\dfrac{1}{2}\ket{\alpha^-}_{23}\ket{\alpha^+}_{56}$ \\
   			& $\ket{\beta^-}_{14}$ & $ \dfrac{-1}{2}\ket{\alpha^+}_{23}\ket{\alpha^+}_{56}$ \\\hline
   			\multirow{ 4}{*}{$\mathbb{Z}$} & $\ket{\alpha^+}_{14}$ & $\dfrac{1}{2}\ket{\alpha^-}_{23}\ket{\alpha^+}_{56}$ \\
   			& $\ket{\alpha^-}_{14}$ & $\dfrac{1}{2}\ket{\alpha^+}_{23}\ket{\alpha^+}_{56}$ \\
   			& $\ket{\beta^+}_{14}$ & $\dfrac{1}{2}\ket{\beta^-}_{23}\ket{\alpha^+}_{56}$ \\
   			& $\ket{\beta^-}_{14}$ & $ \dfrac{1}{2}\ket{\beta^+}_{23}\ket{\alpha^+}_{56}$ \\
   			\hline
   		\end{tabular}
   		\label{Summary}
   		\end{center}
   	\end{table}
   	\item The dealer $D$ announces in public (i) the product state created, (ii) qubit being transformed, i.e., $1$ or $4$.
   	\item The participant $P_1$ announces the Bell basis measurement made on his qubits $1,4$.
   	\item The participants $P_2$ and $P_3$ make their respective measurements on qubits $(2,6)$ and $(3,5)$, respectively. Hence, it colludes to find the operation done by dealer $D$ using Table \ref{Summary}.
   \end{enumerate}
   \begin{figure}[!ht]
   	\begin{center}
   		\includegraphics[width=0.80\linewidth]{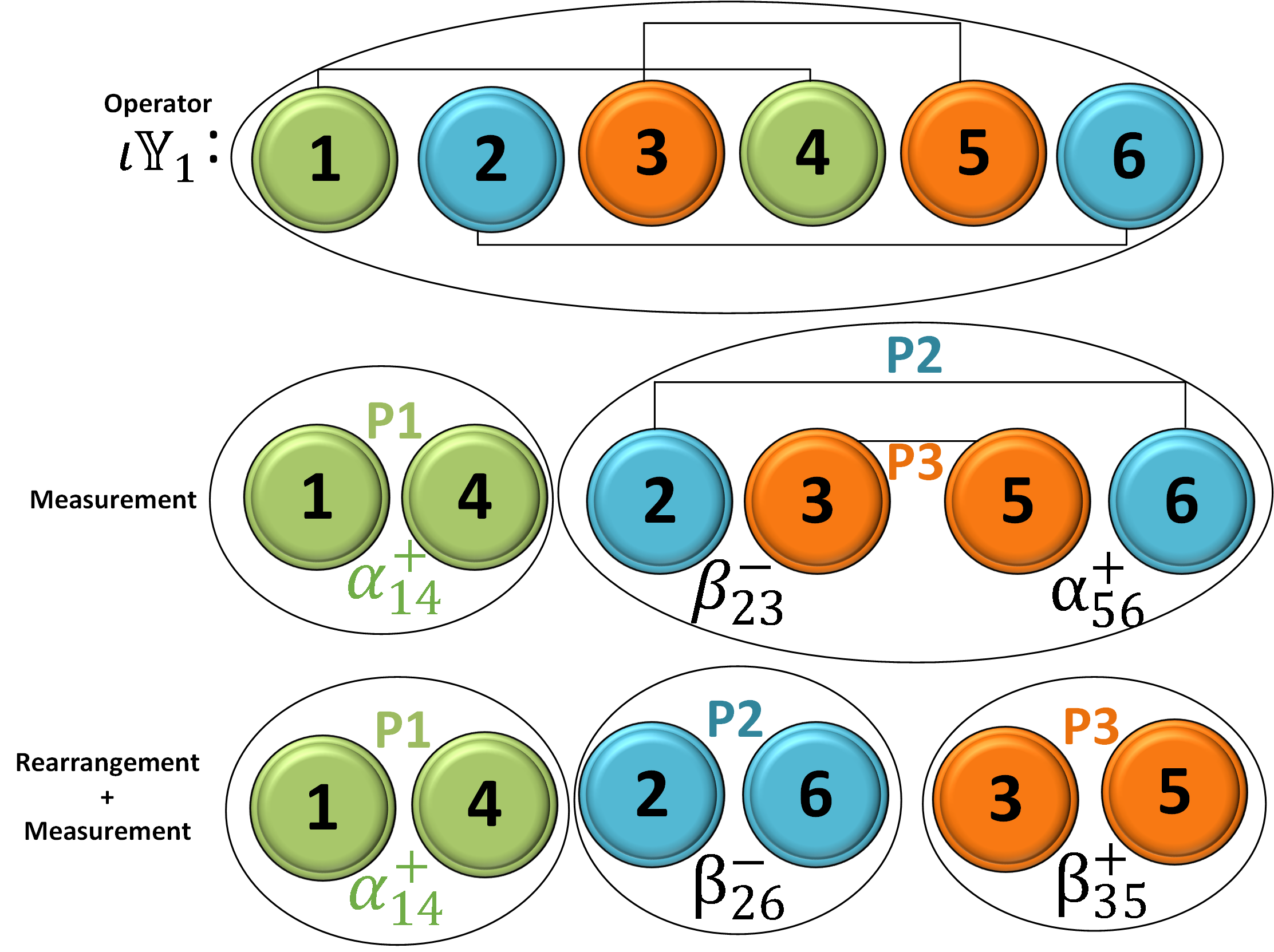}
   		\caption{Proposed protocol for quantum secret sharing}
   		\label{block}
   	\end{center}
   \end{figure}
   In this way, all the three participants hold their shares generated for the secret $s$. In the next subsection, we describe the procedure of secret reconstruction using the shares of these three participants.
   \subsection{Secret Reconstruction}
   The secret is reconstructed by the parties when each ushers in the correct information. Table \ref{Summary} gives the exact information of the unitary operation performed by the dealer $D$. We show the reconstruction result of a particular case, and a similar strategy can be adapted for other cases.  
   \begin{enumerate}
   	\item The following information are available in the beginning of the reconstruction process.
   	\begin{enumerate}
   		\item The product state created by the dealer $D$.
   		\item The qubits operated by $D$. Assume here that these qubits are $1$ or $4$.
   		\item Measurement of the participant $P_1$.
   	\end{enumerate}
   	\item Apart from the above information, the participants $P_2$ and $P_3$ announce their measurements in public, say  $\ket{\beta^-}_{26}$ and $\ket{\beta^+}_{35}$. Then the combined state becomes:
   	\begin{scriptsize}
   		\begin{equation}\label{Far}
   		\begin{aligned}\ket{\gamma}={}&\ket{\beta^-}_{26}\ket{\beta^+}_{35}\\
   		&=\left(\ket{0}_2\ket{1}_6-\ket{1}_2\ket{0}_6\right) \left(\ket{0}_3\ket{1}_5+\ket{1}_3\ket{0}_5\right)\\
   		&=\bigg[\ket{0}_2\ket{0}_3\left(\ket{\alpha^+}_{56}-\ket{\alpha^-}_{56}\right)+ \ket{0}_2\ket{1}_3\left(\ket{\beta^+}_{56}+\ket{\beta^-}_{56}\right)\\
   		&- \ket{1}_2\ket{0}_3\left(\ket{\beta^+}_{56}-\ket{\beta^-}_{56}\right)- \ket{1}_2\ket{1}_3\left(\ket{\alpha^+}_{56}+\ket{\alpha^-}_{56}\right)\bigg]\\
   		&=\ket{\alpha^-}_{23}\ket{\alpha^+}_{56} - \ket{\alpha^+}_{23}\ket{\alpha^-}_{56}+ \ket{\beta^-}_{23}\ket{\beta^+}_{56}+ \ket{\beta^+}_{23}\ket{\beta^-}_{56}
   		\end{aligned}
   		\end{equation}
   	\end{scriptsize}
   	\item Now, since they know that the qubit pair $(5,6)$ was never tampered with, hence the state of qubits $(5,6)$ should be $\ket{\alpha^+}_{56}$, which corresponds to only $\ket{\alpha^-}_{23}$ above in eqn.\eqref{Far}. Hence, the current state of their qubits is \begin{equation}\label{1}
   	\ket{\gamma}= \ket{\alpha^-}_{23}\ket{\alpha^+}_{56}
   	\end{equation}
   	\item At this step, they use the information about the measurement result of $P_1$, i.e., $\ket{\alpha^+}_{14}$. Then, write down the system as
   	\begin{scriptsize}
   		\begin{equation}\label{qq}
   		\begin{aligned}
   		\ket{\chi}={}& \left[ \left(\ket{0}_1\ket{0}_4+\ket{1}_1\ket{1}_4\right) \left(\ket{0}_2\ket{0}_3-\ket{1}_2\ket{1}_3\right)\right]\ket{\alpha^+}_{56}\\
   		&= \bigg[ \ket{0}_1\ket{0}_2\ket{0}_3\ket{0}_4- \ket{0}_1\ket{1}_2\ket{1}_3\ket{0}_4\\
   		&+ \ket{1}_1\ket{0}_2\ket{0}_3\ket{1}_4-\ket{1}_1\ket{1}_2\ket{1}_3\ket{1}_4   \bigg]\ket{\alpha^+}_{56}\\
   		&= \bigg[\ket{\alpha^+}_{12}\ket{\alpha^-}_{34}+ \ket{\alpha^-}_{12}\ket{\alpha^+}_{34}+ \ket{\beta^+}_{12}\ket{\beta^-}_{34}\\
   		&- \ket{\beta^-}_{12}\ket{\beta^+}_{34}\bigg] \ket{\alpha^+}_{56}
   		\end{aligned}
   		\end{equation}
   	\end{scriptsize}
   	\item At this point both $P_2$ and $P_3$ make use of the announcement that $D$ made about the qubit being transformed i.e., $1$ and $4$. Suppose $D$ altered qubit $1$, then they both know that qubits $(3,4)$ were not affected, i.e., they would be in state $\ket{\alpha^+}_{34}$. Using this information and eqn.\eqref{qq}, they deduce that the state is actually
   	\begin{equation}\label{aa}
   	\ket{\chi}=\ket{\alpha^-}_{12}\ket{\alpha^+}_{34} \ket{\alpha^+}_{56}
   	\end{equation}
   	Now, they make use of the information regarding the state that $D$ created, which was $\ket{a^+}_{12}\ket{a^+}_{34}\ket{a^+}_{56}$. Comparing this state with equation \eqref{aa}, they know that the operator acted upon qubit $1$ is $\mathbb{Z}$, since,
   	\begin{equation}\label{zz}
   	\mathbb{Z}_1\ket{\alpha^+}=\ket{\alpha^-}
   	\end{equation} 
   	Hence, they deduce that transformation $\mathbb{Z}$ done by $D$. According to the already agreed protocol they share the string `10', since the operator $\mathbb{Z}_1$ corresponds to `10' according to equation \eqref{secret}.
   \end{enumerate}
\section{Experimental Demonstration} 
The applicability of our secret sharing scheme is essential aspect of proposing one. To demonstrate the experimental viability of the scheme we test it on quantum simulator offered by IBM on its cloud server 'IBM-QE (Quantum Experience) \cite{IBM}. The simulators are (i)\texttt{ibmq\_qasm\_simulator\_v0.1.547} with $1024$ shots and (ii) \texttt{ibmq\_16\_melbourne} with $1024$ shots. Since the scheme is based on entanglement and swapping the qubits, we demonstrate the entanglement property between the qubits $(1,4)$, when $(2,6)$ and $(3,5)$ are measured.
\begin{description}
	\item[1.] \texttt{ibmq\_qasm\_simulator\_v0.1.547}: Fig. \ref{C1} is the quantum circuit on the `qasm simulator' with the corresponding states of the $1^{\text{st}}$ and $2^{\text{nd}}$ qubits after measurement given in Fig. \ref{C2}, as expected the state of the qubits $(1,4)$ are entangled, after each party measures its particles $(2,6)$ and $(3,5)$ in the Bell basis and announces its results. This is ascertained by the nearly equal probabilities for the states $\ket{00}$ and $\ket{11}$ for qubits $(1,4)$.
	\begin{figure}[!htbp]
		\begin{center}
			\includegraphics[width=1\linewidth]{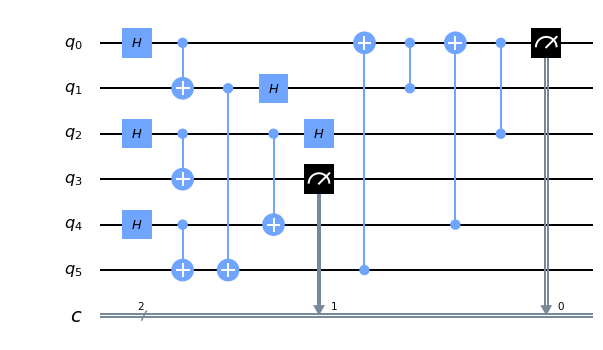}
			\caption{Quantum Circuit for 6 qubits entanglement-swapping secret sharing protocol for operator $\mathbb{I}_1$ }
			\label{C1}
		\end{center}
	\end{figure}
	\begin{figure}[!htbp]
		\begin{center}
			\includegraphics[width=0.7\linewidth]{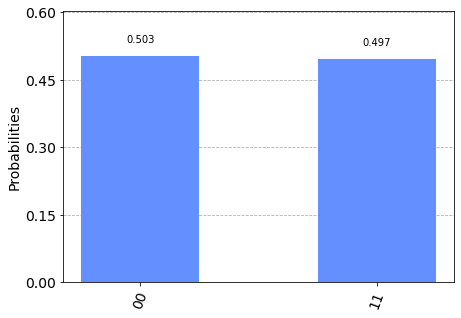}
			\caption{Probabilities for the qubits $(1,4)$ }
			\label{C2}
		\end{center}
	\end{figure}
\item[2.] \texttt{ibmq\_16\_melbourne}: The transpiled circuit on the \texttt{ibmq\_16\_melbourne} hardware is given in Fig.\ref{C3}. The corresponding measurement results on qubits $(1,4)$ are given in Fig. \ref{C4}. As can be seen from the probabilities in Fig. \ref{C4}, there are some unwanted states with non-zero probabilities, but majority are in the state that we should have theoretically. This is due to noise present and different errors in the C-NOT gates, the read out error in the actual quantum systems. The error map for the quantum hardware \texttt{ibmq\_16\_melbourne} is given in Fig. \ref{C5}. 

\begin{figure}[!htbp]
	   \centering
	\includegraphics[width=\linewidth]{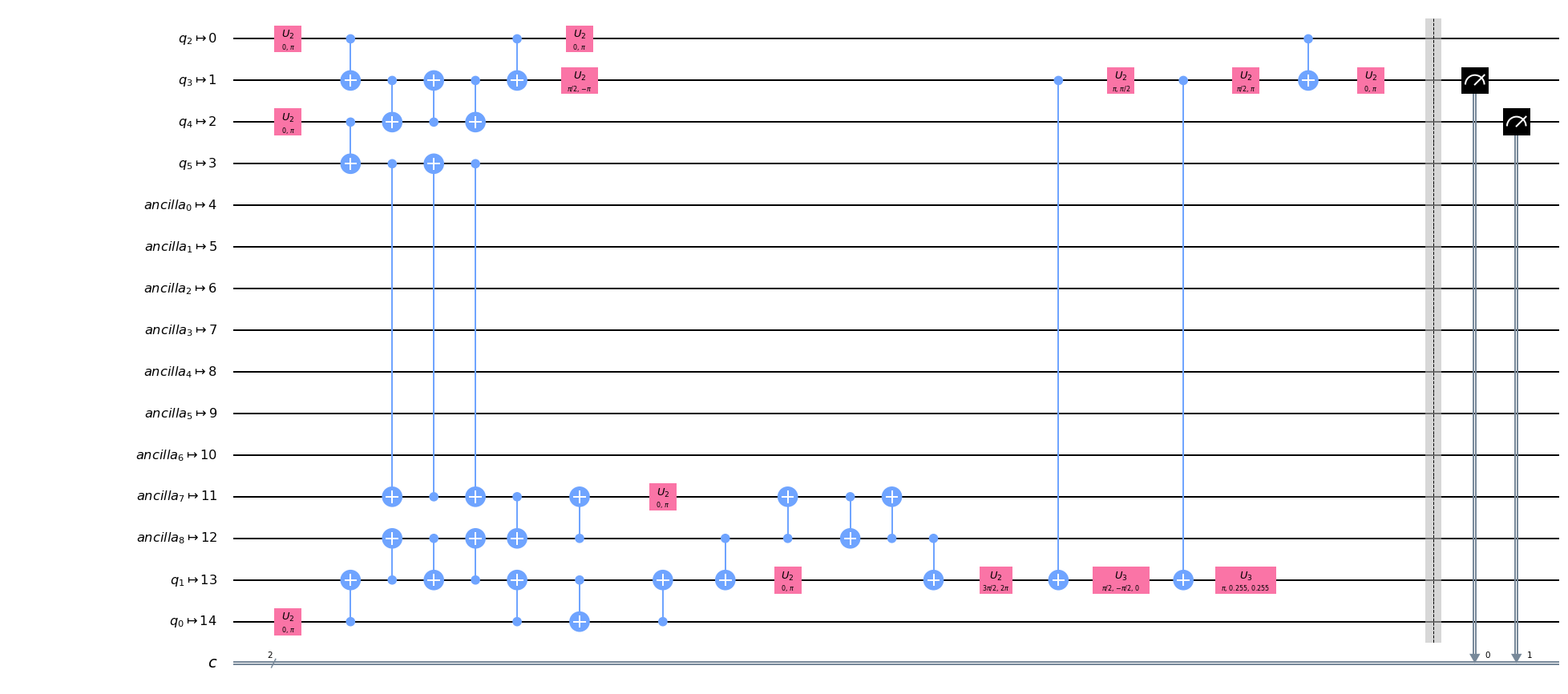}
		\caption{Quantum Circuit for 6 qubits entanglement-swapping secret sharing protocol for operator $\mathbb{I}_1$}
		\label{C3}
\end{figure}
\begin{figure}[!htbp]
	\centering
	\includegraphics[width=1\linewidth]{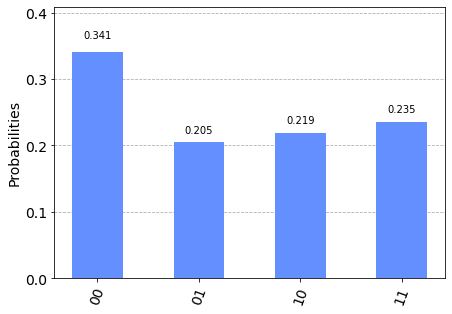}
	\caption{Probabilities for the state of qubits $(1,4)$}
	\label{C4}
\end{figure}
\begin{figure}[!htbp]
	\centering
	\includegraphics[width=1\linewidth]{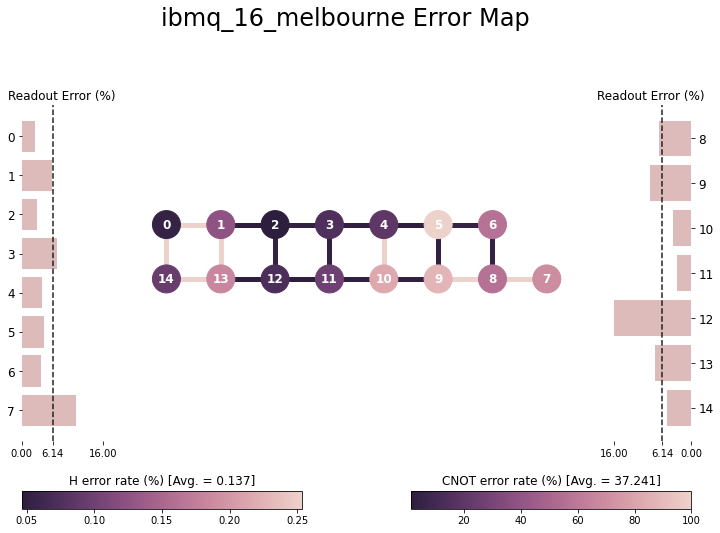}
	\caption{Error map for \texttt{ibmq\_16\_melbourne} }
	\label{C5}
\end{figure}
\end{description} 
\begin{table}[!htbp]
	\caption{ Time complexity analysis of the simulations}
	\centering
	\begin{tabular}{llll}
		\toprule
		& &	\multicolumn{2}{c}{Time} \\ 
		\cmidrule{3-4} \multicolumn{1}{l}{Backend}&\multicolumn{1}{l}{Shots}      &  \multicolumn{1}{l}{Validating}   &  \multicolumn{1}{l}{Running} \\
		\cmidrule{1-4}
		\texttt{ibmq\_qasm\_simulator\_v0.1.547}&1024&1.2s &  6ms\\
		\texttt{ibmq\_16\_melbourne}   &1024& 793ms&10s\\
		\bottomrule
	\end{tabular}
	\label{C6}
\end{table}
The time analysis of each experiment is given in Table \ref{C6}.
\section{Security Analysis}
The security analysis is an important benchmarks for any quantum secret sharing algorithm to pass in order to term it as secure and viable. To test our protocol against this benchmark, we provide proofs for the security of our protocol in terms of the participation of each party and security against an eavesdropper.
   \begin{theorem}
   	The state of each participant is maximally entangled.
   \end{theorem}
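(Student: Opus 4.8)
The plan is to read the theorem as a statement about reduced density operators: here ``maximally entangled'' means that the two qubits held by any single participant are maximally entangled with the remaining four qubits of the system, which by the standard criterion for bipartite pure states is equivalent to the participant's reduced state being the maximally mixed operator $\tfrac14\mathbb{I}_4$ on $\mathbb{C}^2\otimes\mathbb{C}^2$. First I would record the structural fact that the global state $\ket{\psi}$ of Eq.~\eqref{Bell} factorises into three independent Bell pairs supported on qubits $(1,2)$, $(3,4)$ and $(5,6)$, and that each party holds exactly one qubit drawn from two \emph{distinct} pairs: $P_1$ holds $(1,4)$, $P_2$ holds $(2,6)$, $P_3$ holds $(3,5)$.

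Next I would compute the reduced operator for $P_1$ by tracing out qubits $2,3,5,6$. Because $\ket{\psi}\bra{\psi}$ is a tensor product across the three pairs and the partial trace distributes over tensor factors, the computation factorises: tracing qubit $2$ out of the $(1,2)$ pair gives $\tfrac12\mathbb{I}$, tracing qubit $3$ out of the $(3,4)$ pair gives $\tfrac12\mathbb{I}$, and the $(5,6)$ pair is traced away entirely. Hence $\rho_{P_1}=\mathrm{Tr}_{2356}\ket{\psi}\bra{\psi}=\tfrac12\mathbb{I}\otimes\tfrac12\mathbb{I}=\tfrac14\mathbb{I}_4$. By the symmetry of the allocation the identical argument gives $\rho_{P_2}=\rho_{P_3}=\tfrac14\mathbb{I}_4$, so every party's local state is maximally mixed, i.e.\ maximally entangled with its complement.

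I would then check that the dealer's operation does not spoil the conclusion. The transformation producing $\ket{\psi'}$ in Eq.~\eqref{2} is a single-qubit Pauli acting on one qubit of one participant; since local unitaries preserve the Schmidt spectrum across any cut that keeps the affected qubit on one side, each reduced operator retains its maximally mixed form. I would close by drawing out the cryptographic payoff: a maximally mixed reduced state assigns equal probability to every local measurement outcome, so no single party can learn anything about the dealer's operation without colluding, which is exactly the mandatory-participation property the scheme advertises.

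I expect the only genuine subtlety to lie not in the partial-trace arithmetic, which is routine, but in fixing the intended reading of ``maximally entangled'': one must be explicit that the relevant bipartition is \emph{one participant versus the rest of the system}, and not the two qubits of a single participant against each other, since the latter are in fact unentangled (each being paired with a qubit held by a different party). Getting that framing right is what makes the one-line density-matrix computation actually prove the claimed security-relevant statement.
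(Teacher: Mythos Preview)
Your proposal is correct and follows the same strategy as the paper: compute the reduced density operator on a participant's two qubits by tracing out the other four and show it equals $\tfrac14\,\mathbb{I}\otimes\mathbb{I}$, hence maximally mixed. The paper performs the partial trace by summing over computational-basis states of the traced subsystem (working directly with the post-operation state $\ket{\psi'}$), whereas you exploit the tensor-product factorisation of $\ket{\psi}$ into three independent Bell pairs and then appeal to invariance under the dealer's local Pauli; your explicit remark that the intended bipartition is \emph{one participant versus the rest}, not the participant's two qubits against each other, is a clarification the paper leaves implicit.
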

   \begin{proof}: 
   	The secrecy of the scheme depends upon many factors, of which the foremost is the lack of knowledge of each participant of his/her own subsystem. To find out what each participant sees with his particles in hand is by way of analyzing the density matrix of their subsystems. Denote the density matrix of $P_1$ is $M_{P_1}$. Since the protocol is symmetric in $P_1,P_2,$ and $P_3$, hence, other cases are equivalent. The density matrix $M$ for the entire system is
   	\begin{scriptsize}
   		\begin{equation}\label{Density}
   		\begin{aligned}
   		~~~~~M_{All}={}&\ket{\psi} \bra{\psi}\\
   		=&\frac{1}{8}\Bigg[\left(  \ket{1}_1\ket{0}_2+ \ket{0}_1\ket{1}_2\right) \left(\ket{0}_3\ket{0}_4+ \ket{1}_3\ket{1}_4\right)\left(\ket{0}_5\ket{0}_6+ \ket{1}_5\ket{1}_6\right)\\
   		&\left(  \bra{1}_1\bra{0}_2+ \bra{0}_1\bra{1}_2\right) \left(\bra{0}_3\bra{0}_4+ \bra{1}_3\bra{1}_4\right)\left(\bra{0}_5\bra{0}_6+ \bra{1}_5\bra{1}_6  \right)\Bigg]
   		\end{aligned}
   		\end{equation}
   	\end{scriptsize}
   	For $P_1$'s subsystem, we calculate 
   	$$M_{14}= \sum_a Trace_{2356}\ket{a}M_{All} \bra{a}.$$
   	To avoid this tedious calculation, we observe that the Trace effectively becomes
   	\begin{equation}\label{Trace}
   	M_{14}=Tr_{2356}=\sum_{a}\bra{a}\rho\ket{a}
   	\end{equation} where $\bra{a}= \langle0000|,\langle0100|, \langle1000|,\langle1100|,\langle0011|,\langle0111|,\langle1011|$,
   	$\langle1111|$.
   	Hence, we get \begin{equation}\label{aaa}
   	\begin{aligned}
   	M_{14}={}&\dfrac{\ket{0}\bra{0}+ \ket{1}\bra{0}+\ket{0}\bra{1}+\ket{1}\bra{1}}{4}=\dfrac{\mathbb{I}\otimes\mathbb{I}}{4}
   	\end{aligned}
   	\end{equation}
   	Since $Tr(M_{14})=1, Tr(M^2_{14})=\dfrac{1}{4}<1$, we have the state of $P_1$'s qubits as maximally entangled. This means that $P_1$'s chances of measuring any state $\ket{00},\ket{01},\ket{10},\ket{11}$ are $\dfrac{1}{4}$. Even if $P_1$ used a Bell measurement instead of the usual computational basis, it would get the same result, for instance
   	\begin{scriptsize}
   		\begin{equation}\label{prob}
   		\begin{aligned}
   		\textit{P}(\ket{\alpha^+}={}&\dfrac{1}{4}\bra{\alpha^+}\Bigg[ \left( \dfrac{\ket{\alpha^+}+\ket{\alpha^-}}{\sqrt{2}}   \right) \left( \dfrac{\bra{\alpha^+}+\bra{\alpha^-}}{\sqrt{2}}   \right)\\
   		&+ \left( \dfrac{\ket{\beta^+}-\ket{\beta^-}}{\sqrt{2}}   \right) \left( \dfrac{\bra{\beta^+}-\bra{\beta^-}}{\sqrt{2}}   \right)\\
   		&+  \left( \dfrac{\ket{\beta^+}+\ket{\beta^-}}{\sqrt{2}}   \right) \left( \dfrac{\bra{\beta^+}+\bra{\beta^-}}{\sqrt{2}}\right)\\
   		&+ \left( \dfrac{\ket{\alpha^+}-\ket{\alpha^-}}{\sqrt{2}}   \right) \left( \dfrac{\bra{\alpha^+}-\bra{\alpha^-}}{\sqrt{2}}   \right)    \bigg]\ket{\alpha^+}\\
   		&=\dfrac{1}{4}
   		\end{aligned}
   		\end{equation}
   	\end{scriptsize}
   \end{proof}
   Thus, the indication of a maximally mixed state for each of the participants ensures there is no information leakage occurs before the protocol begins, as none of them is in a position to guess the other entangled particle.
   \begin{theorem}
   	{$D$'s announcement of $\ket{\psi}$ is imperative for the correct reconstruction.}
   \end{theorem}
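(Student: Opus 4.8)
The plan is to prove necessity by exhibiting a genuine ambiguity that appears precisely when $\ket{\psi}$ is withheld. Recall that in the reconstruction the parties peel off the untampered pairs and $P_1$'s outcome to recover a single Bell state on the tampered pair, e.g.\ $\ket{\alpha^-}_{12}$ in \eqref{aa}, and then read off the operator by comparing this state with the announced source state through the relation \eqref{zz}, namely $\mathbb{Z}_1\ket{\alpha^+}=\ket{\alpha^-}$. My strategy is to show that this final comparison is the \emph{only} place the source state enters the decoding, and that stripping it away leaves all four operators equally consistent with every other piece of public data.

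First I would establish the algebraic core: the four operators $\{\mathbb{I},\mathbb{X},i\mathbb{Y},\mathbb{Z}\}$ act simply transitively, up to a global phase, on the four Bell states $\{\ket{\alpha^+},\ket{\alpha^-},\ket{\beta^+},\ket{\beta^-}\}$; that is, for any fixed observed Bell state there is exactly one initial Bell state realizing each operator. Applied to the worked example, the observed $\ket{\alpha^-}_{12}$ is produced by $(\ket{\alpha^+},\mathbb{Z})$, by $(\ket{\alpha^-},\mathbb{I})$, by $(\ket{\beta^+},i\mathbb{Y})$, and by $(\ket{\beta^-},\mathbb{X})$, the last only up to a global phase since $\mathbb{X}_1\ket{\beta^-}=-\ket{\alpha^-}$. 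Through the correspondence \eqref{secret} these four hypotheses decode to the four distinct strings `10', `00', `11', `01', so the two shared bits are completely undetermined when the source is unknown.

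Next I would argue that no other announced information breaks this symmetry. The disclosed qubit index ($1$ versus $4$) and $P_1$'s disclosed Bell outcome serve only, as in steps (4)--(5) of the reconstruction, to isolate which single Bell state survives on the tampered pair; these are identical across the four initial-state hypotheses above and hence cannot discriminate among them. Likewise the outcomes of $P_2$ and $P_3$, via \eqref{Far}, fix the observed state but carry no trace of the source. Consequently the map from public data to the secret string becomes a well-defined injection only after $\ket{\psi}$ is adjoined: without it the fibre over the observed state has size four, which is exactly the assertion of the theorem. Conversely, once $\ket{\psi}$ is announced the source is pinned down and simple transitivity makes the operator, and thus the string, unique, consistent with Table \ref{Summary}.

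The main obstacle is the careful bookkeeping of global phases in the simple-transitivity claim: several operator actions produce a sign (as in $\mathbb{X}_1\ket{\beta^-}=-\ket{\alpha^-}$), and I must confirm that identifying states up to this phase does not accidentally merge two of the four hypotheses into one, which would weaken the four-fold ambiguity. Verifying that the four (source, operator) pairs remain genuinely distinct at the level of \eqref{secret}, and that these phases never leak into any measurable quantity the parties could exploit, is the delicate point; everything else reduces to the substitution already carried out in \eqref{aa}--\eqref{zz}.
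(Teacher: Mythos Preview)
Your argument is sound in outline but takes a different route from the paper and contains one overstated claim.

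The paper proceeds by explicit counterexample: $D$ actually prepares $\ket{\alpha^-}_{12}\ket{\alpha^-}_{34}\ket{\alpha^-}_{56}$ while falsely announcing $\ket{\alpha^+}_{12}\ket{\alpha^+}_{34}\ket{\alpha^+}_{56}$, a specific set of measurement outcomes is run through the reconstruction, and the parties decode $i\mathbb{Y}$ instead of the intended $\mathbb{I}$. The paper then attaches a probability $1/16$ of accidental correctness per round and raises it to the fourth power. Your simple-transitivity argument is more structural: it explains \emph{why} every operator is equally consistent with the observed data once the source is hidden, rather than exhibiting a single failed trace, and it sidesteps the somewhat heuristic probability bookkeeping. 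What the paper's approach buys is concreteness tied directly to the protocol's running example; what yours buys is a uniform explanation of the mechanism and a cleaner quantification (exactly four-fold ambiguity per round).

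One correction is needed. Your assertion that the final comparison \eqref{zz} is the \emph{only} place the source state enters the decoding is false. The $(5,6)$ component of $\ket{\psi}$ is already used in step~3 of the reconstruction to select the surviving term from \eqref{Far}, and the $(3,4)$ component is used in step~5 to select the surviving term from \eqref{qq}; only then does the $(1,2)$ component appear in the final comparison. Your four hypotheses in fact hold the $(3,4)$ and $(5,6)$ components fixed and vary only the $(1,2)$ component---which still proves the theorem, since if even this partial piece of $\ket{\psi}$ is indispensable then the full announcement certainly is. But you should rephrase accordingly: as written, your claim that the earlier steps ``proceed identically across the four initial-state hypotheses'' tacitly assumes the other two components are available, and this should be stated explicitly rather than asserted away.
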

   \begin{proof}: The information announced by $D$ about the product state and the changed qubit are very crucial in reconstructing the secret. Suppose $D$ shares the state $\ket{\psi}=\ket{\alpha^-}_{12}\ket{\alpha^-}_{34}\ket{\alpha^-}_{56}$ and announces that he shares \begin{scriptsize}$\ket{\phi}=\ket{\alpha^+}_{12}\ket{\alpha^+}_{34}\ket{\alpha^+}_{56}$\end{scriptsize} between the participants. Suppose he wanted to share the bit-string `00', so he used the $\mathbb{I}$ operator. Suppose $P_2$ measures $\ket{\beta^-}_{26}$ and $\ket{\beta^+}_{35}$. They use $D$'s information about the shared product state $\ket{\phi}$, to obtain the state of $(2,3)$ as $\ket{\alpha^-}_{23}$. They both by the announcement of $P_1$'s measurement say $\ket{\beta^-}_{14}$, deduce that the operation is $i\mathbb{Y}$ which is the incorrect outcome. Hence,
   	\begin{scriptsize}
   		\begin{equation}
   		\begin{aligned}
   		\textit{P}\text{(Right Operator)}={}&\text{\textit{P}(Rightly inferred state of (2,3))}\cdot\text{\textit{P}(Rightly announced state of (1,4))}\\
   		&=\dfrac{1}{16}
   		\end{aligned}
   		\end{equation}
   	\end{scriptsize}
   	Since the protocol shares $4$ two-bit strings, therefore the probability of transmitting the correct bit-string is \begin{equation}\label{qrq}
   	\text{\textit{P}(Correct string)}=\left(\dfrac{1}{16}\right)^4
   	\end{equation}
   	We can see that the probability of correct decoding of the bit string tends to zero as the length of the string increases if the announcement made by $D$ is incorrect. Thus, the protocol is intricate, dependent on $D$'s announcement of the product state shared.
   \end{proof}
   \begin{theorem}
   	{$D$'s announcement of the qubit transformed is imperative for the correct reconstruction.}
   \end{theorem}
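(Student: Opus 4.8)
:
The plan is to show that the quantum data produced by the protocol are \emph{identical} whether $D$ operates on qubit $1$ or on qubit $4$, so that the participants can never resolve this choice from their measurements, whereas the bit-string map \eqref{secret} assigns \emph{different} strings to $U_1$ and $U_4$; $D$'s announcement is then the only information that lifts this ambiguity. First I would establish the key symmetry of the shared pair $(1,4)$: for every $U\in\{\mathbb{I},\mathbb{X},i\mathbb{Y},\mathbb{Z}\}$ the ricochet identity $(U\otimes\mathbb{I})\ket{\alpha^+}=(\mathbb{I}\otimes U^{T})\ket{\alpha^+}$ holds, and since $\mathbb{I}^{T}=\mathbb{I}$, $\mathbb{X}^{T}=\mathbb{X}$, $\mathbb{Z}^{T}=\mathbb{Z}$ and $(i\mathbb{Y})^{T}=-\,i\mathbb{Y}$, acting with $U$ on qubit $1$ yields exactly the same two-qubit state as acting with $U$ on qubit $4$, up to at most a global phase $-1$ occurring only for $U=i\mathbb{Y}$. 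In particular, mirroring \eqref{zz}, $\mathbb{Z}_{4}\ket{\alpha^+}=\ket{\alpha^-}=\mathbb{Z}_{1}\ket{\alpha^+}$, and analogously for the remaining operators.

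Next I would propagate this identity through the protocol. Because the state of $(1,4)$ after $D$'s operation agrees in the two scenarios up to a global phase, $P_1$'s Bell-basis measurement returns the same outcome probabilities, and by the entanglement-swapping computation of Steps~3--5 the post-measurement states delivered to $P_2$ and $P_3$ coincide as well. Hence the qubit-$4$ analogue of Table~\ref{Summary} is row-for-row identical to Table~\ref{Summary}, the unobservable sign for $i\mathbb{Y}$ dropping out of every collapsed state. Re-running the reconstruction of the preceding theorem, the parties still correctly infer \emph{which} operator $U$ was applied, but nothing in their joint data separates $U_1$ from $U_4$.

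I would then quantify the resulting loss. By \eqref{secret} each operator is split across the two positions into complementary strings --- for instance $\mathbb{Z}_1$ encodes `10' whereas $\mathbb{Z}_4$ encodes `01' --- so once $U$ is known there remain exactly two equally likely candidate strings in each round. Without $D$'s announcement a party can only guess the transformed qubit, succeeding with probability $\tfrac12$ in each of the $k$ rounds (here $k=4$ for an $8$-bit secret), whence $\textit{P}(\text{Correct string})=\left(\tfrac12\right)^{k}\to 0$ as the secret length grows. This is exactly the conclusion asserted.

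The main obstacle I anticipate is the propagation step: carefully carrying the $i\mathbb{Y}$ global-phase sign through the swap of \eqref{rep}--\eqref{eqq} to confirm that the collapsed states of $P_2$ and $P_3$ genuinely coincide in both scenarios, and checking that the publicly disclosed data --- the product state $\ket{\psi}$ and $P_1$'s announced basis --- carry no information that could separate $U_1$ from $U_4$ and thereby accidentally break the symmetry.
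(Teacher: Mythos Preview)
Your overall plan is sound and your final probability estimate $(1/2)^k$ matches the paper, but the central step is misapplied. The ricochet identity $(U\otimes\mathbb{I})\ket{\alpha^+}=(\mathbb{I}\otimes U^{T})\ket{\alpha^+}$ holds for the two qubits of a \emph{single} Bell pair, yet qubits $1$ and $4$ are \emph{not} such a pair: in the shared state \eqref{Bell} the EPR pairs are $(1,2)$, $(3,4)$ and $(5,6)$, while $P_1$ merely \emph{holds} $1$ and $4$. Their reduced state is $M_{14}=\tfrac{1}{4}\mathbb{I}\otimes\mathbb{I}$ (cf.\ the proof of Theorem~1), not $\ket{\alpha^+}$, so ``$U_1\ket{\alpha^+}_{14}=U_4\ket{\alpha^+}_{14}$ up to phase'' is not a statement about the protocol. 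In particular, $U_1\ket{\psi}$ and $U_4\ket{\psi}$ are \emph{different} six-qubit states (e.g.\ $\mathbb{X}_1\ket{\psi}=\ket{\beta^+}_{12}\ket{\alpha^+}_{34}\ket{\alpha^+}_{56}$ versus $\mathbb{X}_4\ket{\psi}=\ket{\alpha^+}_{12}\ket{\beta^+}_{34}\ket{\alpha^+}_{56}$). Equality of the reduced state on $(1,4)$ does yield equal outcome \emph{probabilities} for $P_1$, but it does \emph{not} by itself force the post-measurement states on $(2,3,5,6)$ to coincide, so your propagation step does not follow from what you have shown.

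A correct structural argument does exist, but it goes through a different symmetry: the simultaneous swap $\mathrm{SWAP}_{14}\,\mathrm{SWAP}_{23}$ fixes $\ket{\psi}$ (each $\ket{\alpha^+}$ is symmetric) and intertwines $U_1$ with $U_4$; since $P_1$'s Bell projectors on $(1,4)$ are eigenvectors of $\mathrm{SWAP}_{14}$ and the collapsed $(2,3)$ states in Table~\ref{Summary} are eigenvectors of $\mathrm{SWAP}_{23}$, the two scenarios indeed yield identical statistics for all three parties. The paper takes a more pedestrian route: it simply works out one concrete instance (operator $\mathbb{Z}$, with specific measurement outcomes for $P_2,P_3$) to exhibit the two indistinguishable candidates $\ket{\alpha^+}_{12}\ket{\alpha^-}_{34}\ket{\alpha^+}_{56}$ versus $\ket{\alpha^-}_{12}\ket{\alpha^+}_{34}\ket{\alpha^+}_{56}$, observes that the parties recover $U=\mathbb{Z}$ but cannot decide between `01' and `10', and concludes $\textit{P}(\text{Correct string})=(1/2)^4$. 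If you wish to keep your general approach, replace the ricochet step with the swap-symmetry observation above; otherwise the example-driven argument of the paper suffices.
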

   \begin{proof}: 
   	The actual protagonist of the protocol happens to be $D$ since it is $D$ who initiates the protocol, it has to usher correct information to the colluding parties for the reconstruction of the secret. Suppose instead of the first qubit, $D$ toggled the fourth qubit and did not announce this information. Then the situation would be, even if the participants $P_1, P_2, P_3$ know about the shared product state they would not be able to reconstruct the secret. For instance, if  $P_2$ measures $\ket{\beta^-}_{26}$ and $\ket{\beta^+}_{35}$, then upon information provided by $D$ on the shared state they would come up with the choice of the state being  $\ket{\alpha^+}_{12}\ket{\alpha^-}_{34}\ket{\alpha^+}_{56}$ or $ \ket{\alpha^-}_{12}\ket{\alpha^+}_{34} \ket{\alpha^+}_{56}$. At this point, they know the operator being used, which is the $\mathbb{Z}$ operator. But, they will still not be able to reconstruct the secret message, i.e., whether it is `01' or `10' until and unless $D$ decides to give in the information about the qubit he toggled. If $D$ announces that it is the $4th$ qubit that was toggled, then the parties know that they have the state  $\ket{\alpha^+}_{12}\ket{\alpha^-}_{34}\ket{\alpha^+}_{56}$ and deduce the bit string `01' corresponding to $\mathbb{Z}_4$. Thus they are wrong half  the times, thereby $\text{Pr(Correct string)}=\left(\dfrac{1}{2}\right)^4$.
   \end{proof}
   \begin{theorem}
   	{If $P_1$ lies about his measurement, then the protocol is compromised.}
   \end{theorem}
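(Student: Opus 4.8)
The plan is to show that $P_1$'s announced Bell measurement is an indispensable and non-redundant input to the reconstruction, so that a false announcement necessarily steers the colluding parties to the wrong operator. First I would retrace the reconstruction of the Secret Reconstruction subsection, isolating exactly where $P_1$'s outcome enters. The parties first use $P_2,P_3$'s measurements together with the fact that the untampered pair is in $\ket{\alpha^+}_{56}$ to fix the state of qubits $(2,3)$; this step does not involve $P_1$ at all. Only afterwards is $P_1$'s announced outcome combined with that fixed $(2,3)$ state to reconstruct the four-qubit state of $(1,2,3,4)$, as in eqn.\eqref{qq}, and then, after imposing $D$'s announced untampered pair $\ket{\alpha^+}_{34}$ and comparing with the original product state, the operator is read off, as in eqn.\eqref{aa}.

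The key step is to fix everything $P_1$ cannot alter, namely the measurements of $P_2,P_3$, the untampered pairs, and $D$'s two announcements. Then the inferred $(1,2)$ Bell state, and hence the recovered operator and its two-bit share, is a deterministic function of the single Bell state $P_1$ chooses to announce. I would verify, by the same Bell-basis re-expansion used to pass from eqn.\eqref{qq} to eqn.\eqref{aa}, that the four admissible announcements $\ket{\alpha^\pm}_{14},\ket{\beta^\pm}_{14}$ produce four distinct $(1,2)$ states and therefore four distinct operators in $\{\mathbb{I},\mathbb{X},i\mathbb{Y},\mathbb{Z}\}$. In other words, this function is a bijection onto the operator set, so $P_1$ can force the parties to decode any operator he pleases.

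Concretely, mirroring the worked example of the reconstruction, suppose the honest run fixes $(2,3)$ in $\ket{\alpha^-}_{23}$ and $D$ toggled qubit $1$. If $P_1$'s true outcome is $\ket{\alpha^+}_{14}$, the parties obtain $\ket{\alpha^-}_{12}$ and, comparing with $\ket{\alpha^+}_{12}$, deduce $\mathbb{Z}$ (share `10') via eqn.\eqref{zz}. Were $P_1$ instead to announce $\ket{\alpha^-}_{14}$, the identical re-expansion yields $\ket{\alpha^+}_{12}$, so the parties read off $\mathbb{I}$ (share `00'). Since the imposed constraint $\ket{\alpha^+}_{34}$ is always satisfiable in the lied expansion, the deception is never flagged as inconsistent; the parties simply accept a wrong share. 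Hence the probability of a correct two-bit block under a false announcement is $0$, and over the four repetitions that build the eight-bit secret $\text{Pr(correct string)}=0$, so the secret is irrecoverably corrupted.

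The main obstacle is the injectivity claim. I expect the routine part to be the single Bell-basis expansion; the genuine work is confirming that distinct announcements never collapse to the same operator for every admissible $(2,3)$ state and for either choice of the toggled qubit, and, crucially, that a lie can never be \emph{self-correcting}, i.e.\ coincide with some alternative but still correct $(\text{operator},\text{transformed qubit})$ reading. Ruling this out amounts to checking that the four-to-four correspondence between $P_1$'s announcement and the decoded operator is a bijection in every branch of Table \ref{Summary}, which I would establish from the symmetry of the construction across $P_1,P_2,P_3$ noted earlier together with the explicit case analysis of that table.
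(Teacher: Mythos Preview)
Your proposal is correct and follows essentially the same line as the paper: the paper's proof is precisely the single worked example you give, namely $D$ applies $\mathbb{Z}_1$ to send `10', $P_1$ actually obtains $\ket{\alpha^+}_{14}$ but announces $\ket{\alpha^-}_{14}$, and with $P_2,P_3$'s outcomes $\ket{\beta^-}_{26},\ket{\beta^+}_{35}$ the parties are driven to $\ket{\alpha^+}_{12}\ket{\alpha^+}_{34}\ket{\alpha^+}_{56}$ and hence decode $\mathbb{I}_1$ (`00') instead of $\mathbb{Z}_1$ (`10'). The paper stops there; your additional bijection argument (that the four admissible announcements map to four distinct operators for every fixed branch) and the resulting $\text{Pr(correct string)}=0$ go beyond what the paper actually proves, but they are consistent with it and simply make the claim tighter.
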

   \begin{proof}: $P_1$ is a trusted reconstructor of the secret, and his integrity lies with utmost importance in the reconstruction of the secret bit string. However, there can be a case where he cheats and does not let his share to the other parties. Let $D$ use operator $\mathbb{Z}_1$  to send `10'. As a case of $P_1$'s cheating behaviour consider $P_1$ measuring $\ket{\alpha^+}_{14}$ on his qubits, but deliberately cheats and announces as $\ket{\alpha^-}$, $P_2$ and $P_3$ collude with their measurements $\ket{\beta^-}_{26}$ and $\ket{\beta^+}_{35}$ assuming correct information furnished by $D$ and $P_1$. They deduce the state to be \begin{equation*}
   	\begin{aligned}
   	\ket{A}={} & \ket{\alpha^+}_{12}\ket{\alpha^+}_{34}\ket{\alpha^+}_{56}\\
   	&\implies  \mathbb{I}_1\\
   	& \implies  `00'
   	\end{aligned}
   	\end{equation*}
   	whereas the actual state would have been
   	\begin{equation*}
   	\begin{aligned}
   	\ket{A}={} & \ket{\alpha^-}_{12}\ket{\alpha^+}_{34}\ket{\alpha^+}_{56}\\
   	&\implies  \mathbb{Z}_1\\
   	& \implies  `10'
   	\end{aligned}
   	\end{equation*}
   	Hence, we infer that the announcement of $P_1$'s announcement of his measurement outcome is necessary, failing which, the correct operator and hence the correct classical secret will not be reconstructed.
   \end{proof}
   \begin{lemma}
   	{If either one of $P_1$ or $P_2$ lies, then the protocol is compromised.}
   \end{lemma}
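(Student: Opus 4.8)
The plan is to reuse the cheating analysis of the preceding theorem (where $P_1$ lies) and to supply the genuinely new case, namely $P_2$. Since the first security theorem already recorded that the protocol is symmetric in $P_1,P_2,P_3$, it suffices to carry out a single explicit computation for $P_2$ and then appeal to that symmetry. First I would fix the same running instance used in the reconstruction: the dealer shares $\ket{\alpha^+}_{12}\ket{\alpha^+}_{34}\ket{\alpha^+}_{56}$, applies $\mathbb{Z}_1$ to transmit the string `10', and the honest outcomes are $P_1:\ket{\alpha^+}_{14}$, $P_2:\ket{\beta^-}_{26}$, $P_3:\ket{\beta^+}_{35}$, which truthfully reconstruct `10' through equations \eqref{Far}--\eqref{aa}.

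Next I would let $P_2$ deviate while everyone else stays honest: in place of its true outcome $\ket{\beta^-}_{26}$ it announces $\ket{\beta^+}_{26}$. Rerunning the reconstruction verbatim, the combined state of the projection step becomes $\ket{\beta^+}_{26}\ket{\beta^+}_{35}$; re-expanding it in the Bell basis of $(2,3)$ and $(5,6)$ exactly as in \eqref{rep}--\eqref{Far} and projecting onto the untampered $\ket{\alpha^+}_{56}$ sector now isolates $\ket{\alpha^+}_{23}$ in place of the correct $\ket{\alpha^-}_{23}$. Substituting this into \eqref{qq} together with $P_1$'s genuine $\ket{\alpha^+}_{14}$, and then imposing $D$'s announcement that qubit~$1$ was toggled (so $(3,4)$ is pinned to $\ket{\alpha^+}_{34}$), collapses the $(1,2)$ pair to $\ket{\alpha^+}_{12}$ rather than $\ket{\alpha^-}_{12}$. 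Comparison with $D$'s declared $\ket{\alpha^+}_{12}$ shows no phase flip, so $P_1$ and $P_3$ infer $\mathbb{I}_1$ and output `00' by \eqref{secret}, instead of the true $\mathbb{Z}_1\to{}$`10'; this is precisely the failure mode exhibited for $P_1$.

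Finally I would conclude, as in the $P_1$-cheating theorem, that the exhibited instance is already a deterministic compromise, so the protocol fails whenever $P_2$ lies, and the relabelled argument covers the symmetric reconstructor identically; a quantitative bound of the form $\left(\tfrac12\right)^4$ over the full eight-bit secret then follows in the spirit of the earlier theorems. The main difficulty is purely computational bookkeeping: I must reproduce the Bell re-expansion and the swap ordering of qubits $5,6$ faithfully for the lied-about pair, since a sign or ordering slip there would spuriously regenerate $\ket{\alpha^-}_{23}$ and mask the compromise. I therefore expect the care to concentrate on redoing the projection in \eqref{rep}--\eqref{Far} for the falsified outcome rather than on any new idea.
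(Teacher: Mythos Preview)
Your argument is correct, but it follows a genuinely different route from the paper's own proof. You interpret ``lies'' as \emph{announcing a false outcome} and rerun the full Bell re-expansion of \eqref{Far}--\eqref{qq} with $P_2$'s falsified $\ket{\beta^+}_{26}$, obtaining $\ket{\alpha^+}_{23}$ in place of $\ket{\alpha^-}_{23}$ and hence the wrong operator $\mathbb{I}_1$ instead of $\mathbb{Z}_1$. The paper, by contrast, interprets the deviation as \emph{withholding} the outcome: it simply observes that $P_2$ and $P_3$ play symmetric roles, and that if $P_2$ does not disclose its result then $P_3$, even knowing $P_1$'s outcome, cannot pin down the $(2,3)$ component in \eqref{Far}, so every operator in Table~\ref{Summary} remains equally likely and reconstruction cannot proceed at all. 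Your approach mirrors the preceding $P_1$-cheating theorem and yields an explicit wrong-answer instance (plus the $(\tfrac12)^4$-style bound), at the cost of redoing the Bell bookkeeping; the paper's approach is a one-line information-deficit argument that avoids any computation but does not exhibit a concrete miscomputed secret. Both establish the lemma, and your version arguably matches the word ``lies'' in the statement more closely than the paper's ``does not disclose'' reading.
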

   \begin{proof}: The roles of $P_2$ and $P_3$ are symmetric. Suppose that $P_2$ does not reveal his measurement outcome, then from Table 1, it is quite evident that any operator is equally likely. Suppose $P_3$ measures $\ket{\beta^-}_{26}$, but $P_2$ does not disclose his measurement outcome. So in spite of $P_3, $ having $P_1$'s measurement result say $\ket{\alpha^+}_{14}$, cannot deduce the state of the qubits $(2,3)$ from eq. \eqref{Far}. Hence, they will not be able to proceed further with the secret reconstruction.
   \end{proof}
   \begin{theorem}
   	{The algorithm is secure against Eavesdropper Eve's forgery attack}
   \end{theorem}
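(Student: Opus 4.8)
The plan is to treat Eve's forgery as the most general intercept-and-resend attack on a qubit in transit from $D$ to a participant---say the target qubit $1$---and to show that such an attack is simultaneously uninformative and detectable, so that the $n$-out-of-$n$ abort rule halts reconstruction before any secret is produced. The organising principle is that the Bell state $\ket{\psi}$ is known only to $D$ and is disclosed only after every measurement, so Eve necessarily operates against a single-qubit state that is uniformly random to her.

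First I would fix the attack model: Eve intercepts qubit $1$ and, in order to inject an operator of her choosing, either measures it and forwards a forged qubit or forwards it after coupling it to an ancilla. The immediate consequence of the first theorem is that the intercepted qubit carries no usable information. By eqn.~\eqref{aaa} the reduced state of any participant's pair is $M_{14}=\tfrac{1}{4}\,\mathbb{I}\otimes\mathbb{I}$, so every measurement Eve performs returns each outcome with probability $\tfrac14$ and is statistically independent of both the Bell state $D$ used and the Pauli operator $D$ applied. Hence Eve possesses no data on which a consistent forgery could be based.

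Second I would analyse detectability through the entanglement-correlation check that the colluding parties already perform during reconstruction. A qubit forged without knowledge of $\ket{\psi}$ cannot reproduce the Bell correlations that the parties rely on when they re-expand their joint state as in eqn.~\eqref{Far} and when they assert that the untampered pair $(5,6)$ must be $\ket{\alpha^+}_{56}$. I would bound the probability that Eve's substitute accidentally matches the correct correlation on a single run; exactly as in the second and third theorems this is a constant strictly below one, and since the protocol is repeated four times for an eight-bit secret (eqn.~\eqref{qrq}), the probability that a forgery survives every check decays geometrically and tends to zero as the secret grows. Any surviving discrepancy triggers the abort.

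The hard part will be promoting this from the intercept-resend case to the most general \emph{coherent} attack, in which Eve applies a joint unitary $U$ on the intercepted qubit and a retained ancilla. Here I would invoke an information--disturbance trade-off of no-cloning and entanglement-monogamy type: expanding $U$ in the Bell basis and demanding that the participants' post-attack reduced state coincide with $\ket{\psi}$ for \emph{every} one of $D$'s four admissible operators forces $U$ to factor as an operation on the qubit tensored with an operation on Eve's ancilla. A detection-free $U$ therefore leaves the ancilla unentangled with the system and uncorrelated with the secret, so zero information is gained. Establishing that the only disturbance-free unitaries are exactly these trivial ones is the crux; once it is in place, combining zero information gain with the geometric detection probability yields security against Eve's forgery attack.
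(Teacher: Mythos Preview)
Your approach differs substantially from the paper's, both in scope and in the security notion being established. The paper does not attempt a general argument: it treats one concrete scenario in which Eve intercepts qubit~$5$ (a qubit \emph{not} operated on by $D$), applies $\mathbb{X}$ to it, and lets the protocol continue. The paper then traces a single run of reconstruction---$P_1$ obtaining $\ket{\alpha^+}_{14}$, the other two obtaining $\ket{\alpha^-}_{35}$ and $\ket{\beta^-}_{26}$---and observes that, using $D$'s announcement that the pair $(5,6)$ was prepared in $\ket{\alpha^+}_{56}$, the parties still recover the correct operator $\mathbb{I}$. The security property actually established is therefore \emph{robustness of reconstruction} against this one specific tampering, not zero information gain or high detection probability.

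Your proposal instead targets confidentiality (Eve learns nothing because the intercepted marginal is maximally mixed) and detectability (a forgery disturbs the Bell correlations and, over repeated rounds, is caught with probability tending to one), and then attempts to extend this to coherent ancilla attacks via an information--disturbance trade-off. This is considerably more ambitious and would, if completed, yield a stronger statement---but two gaps matter. First, the protocol as described in the paper contains no explicit verification-and-abort check against an external Eve; the only abort mentioned concerns a \emph{participant} withholding an outcome, so your detectability argument presupposes a step the protocol does not have (and your appeal to eqn.~\eqref{qrq} is to a bound for a different scenario, namely $D$ mis-announcing $\ket{\psi}$). Second, the coherent-attack claim---that the only disturbance-free joint unitaries factor as a product with the ancilla for all four of $D$'s operators simultaneously---is only sketched; this is, as you yourself flag, the crux, and without the actual Bell-basis calculation the argument is incomplete. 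In effect you are proving a different and stronger theorem, under an augmented protocol, with the hardest step still open.
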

   \begin{proof}
   	Suppose there is a malicious observer Eve who has some ill intentions about the protocol and seeks to disrupt the protocol. He somehow catches hold of a qubit that $D$ creates say the $5^{th}$ qubit and modifies it by applying the $X$ operator without anybody knowing of this act. So the modified state is
   	\begin{scriptsize}\begin{equation*}
   		\ket{\psi}_{mod}=\left(\frac{\ket{0}_1\ket{0}_2+ \ket{1}_1\ket{1}_2}{\sqrt{2}}\right) \left(\frac{\ket{0}_3\ket{0}_4+ \ket{1}_3\ket{1}_4}{\sqrt{2}}\right)\left(\frac{\ket{1}_5\ket{0}_6+ \ket{0}_5\ket{1}_6}{\sqrt{2}}\right)
   		\end{equation*}\end{scriptsize} $P_1$ measures say $\ket{\alpha^+}_{14}$ and the state collapses into the state \begin{scriptsize}
   		\begin{equation*}\begin{aligned}
   		\ket{\alpha^+}_{23}\ket{\beta^+}_{56}={}&
   		\ket{\alpha^+}_{35}\ket{\beta^+}_{26}+ \ket{\alpha^-}_{35}\ket{\beta^-}_{26}+ \ket{\beta^+}_{35}\ket{\alpha^+}_{26}\\
   		&+ \ket{\beta^-}_{35}\ket{\alpha^-}_{26}
   		\end{aligned}
   		\end{equation*}\end{scriptsize}
   	$P_2$ and $P_3$ measure $\ket{\alpha^-}_{35}$ and $\ket{\beta^-}_{26}$ respectively. They both collude to form the secret with the help of the relation \begin{scriptsize}\begin{equation*}
   		\begin{aligned}
   		\ket{\alpha^-}_{35}\ket{\beta^-}_{26}={}&\ket{\alpha^+}_{23}\ket{\beta^+}_{56}+ \ket{\alpha^-}_{23}\ket{\beta^-}_{56} - \ket{\beta^+}_{23}\ket{\alpha^+}_{56}\\
   		&+ \ket{\beta^-}_{23}\ket{\alpha^-}_{56}
   		\end{aligned}
   		\end{equation*}\end{scriptsize}
   	They make use of $D$'s announcement of the product state he created which was $\ket{\alpha^+}_{12}\ket{\alpha^+}_{34}\ket{\alpha^+}_{56}$, which corresponds to the third term in the above equation. They use $P_1$'s  measurement of $\ket{\alpha^+}_{14}$ to deduce that the operator was actually $\mathbb{I}$, thereby nullifying Eve's attack.
   \end{proof}
   \section{Comparative Analysis}
   This section presents an extensive comparative analysis with the recent developed quantum secret sharing algorithms.
   \begin{enumerate}
   	\item The work proposed in \cite{song2017t} asserts of a secure $(t,n)$ quantum secret sharing scheme that is based on Quantum Fourier Transform (QFT) and Pauli operators for secret sharing. Their claim for the secret reconstruction seems to be on spurious grounds, where they apply the IQFT to a single qubit of an entangled system in isolation for reconstruction. The equation they use for reconstruction is \begin{equation}\label{QFTT}
   	\text{Secret}=\text{IQFT}\left(\dfrac{1}{\sqrt{d}}\sum_{k=0}^{d-1}\omega^{\left(\sum_{r=1}^{t}s_rk\right)}\ket{k}_1\ket{k}_2...\ket{k}_t\right)
   	\end{equation}
   	The algorithm applies the IQFT on the first qubit alone to get the secret as $s_r$ which is definitely absurd. Since the actual implementation would give $\left(\dfrac{1}{d}\sum_{k=0}^{d-1}\sum_{x=0}^{d-1}\omega^{\left(\sum_{r=1}^{t}s_r+kx\right)}\ket{x}\ket{k}_1\ket{k}_2...\ket{k}_t\right)\neq \ket{\sum_{r=1}^{t}s_r \mod d}$ as claimed. the claim would have been right if the QFT is applied to the individual 1st qubit in isolation, i.e.,
   	\begin{scriptsize}\begin{equation}
   		\text{IQFT}\left(\dfrac{1}{d}\sum_{k=0}^{d-1}\omega^{(\sum_{r=1}^{t}s_rk)}\ket{k}_1\right)= \sum_{k=0}^{d-1}\omega^{(\sum_{r=1}^{t}s_rk)}\sum_{j=0}^{d-1}\omega^{-jk}\ket{j}\ket{k}_1
   		\end{equation}\end{scriptsize}
   	The protocol thus does not pass the test of reconstruction as claimed, and can be termed as a compromised one. Whereas our proposed scheme supports safe and secure transmission of the qubits and facilitates an $(n,n)$ reconstruction of the desired secret by means of an entangle-measure-announce scheme adopted by the participants.
   	\item    One of the recent works reported in \cite{mashhadi2019general} work on an amalgamation of classical and quantum secret sharing and at the core utilizes the monotone span program from the classical secret sharing scheme to generate shares. The basic drawback in the scheme that we could notice is (i) the secret $s<d$, where $d$ is the dimension of the quantum system, whereas our scheme relies on operators as basic communication sources rather than the secret value which is implicitly agreed upon by the participants. Secondly, the actual security lies in the classical sharing scheme and is scarcely related to the quantum aspect of sharing. As soon as an eves-dropper or even a dishonest participant gets hold of the classical scheme, he doesn't even need to do the measurement of the qubits to know the secret, without invoking the basics of quantum secret sharing. Most of the quantum secret sharing schemes proposed rely upon the Quantum Fourier Transform (QFT) for secret sharing and secret reconstrcution which includes protocols reported in \cite{yang2013secret,qin2018multi}.
   	Most of the quantum secret sharing schemes rely upon the Quantum Fourier Transform (QFT) for secret sharing and secret reconstruction, which includes protocols reported in \cite{yang2013secret,qin2018multi}.
   	\item The work reported in \cite{qin2016verifiable} is a more quantum scheme than the others in the sense that it realizes the Bell state measurement for the secret reconstruction and verification. The major hindrance to the applicability of the scheme we inferred is again the strict assumption that the secret $s<d$.
   \end{enumerate}
   
  \begin{table}[htbp]
  	\refstepcounter{table}
 	\rotatebox[origin=c]{90}{\textbf{Table}~\thetable: Comparisons with schemes}
  \centering\small\setlength\tabcolsep{2pt}
  \resizebox{0.5\textwidth}{!}{%
  		\rotatebox{90}{
  			\begin{tabular}{l l l l l l l l l l l l l c c c c c c c c c c c  }
  				\toprule
  				\textbf{Parameters} &  Song et al.\cite{song2017t}    &Qin et al.\cite{qin2018multi}  &   Qin and Dai \cite{qin2016d} & Yang et al. \cite{yang2013secret} & Mashhadi\cite{mashhadi2019general} & \textbf{Proposed}\\
  				\midrule
  				Secret Reconstruction       &  \xmark   & \checkmark & \checkmark & \checkmark & \checkmark & \checkmark\\
  				Security                     &  Not secure   & Secure & Not secure   & Secure & Secure & Secure\\
  				\small{Secret type}           &   Classical &    Classical  &  Classical &  Classical  &  Classical &  Quantum\\
  				Basic scheme          &   QFT + CNOT gate       &   QFT+Pauli Gate  &  Pauli Gate &   QFT+Pauli Gate & QFT+Pauli Gate & Entanglement \\
  				Classical secret sharing & \checkmark & \checkmark      & \checkmark            & \checkmark    & \checkmark         & \xmark\\
  				\small{Dimensionality of the system} &  $d $ &  $d$   & $d$  &  $d$  & $d$ & $2$ \\
  				Technique used               &  Multi-party scheme  & Multiparty  & Access structure  & Multiparty & Multiparty & Sequential Measurement\\
  				\small{Eves dropper Manipulations}           &  Possible  &     Not possible  &  Possible  & Not possible &  Not possible & Not possible\\
  				Qubits used                  & $\lceil \log_2d\rceil n$  &  $\lceil \log_2d\rceil n$   & $ \log_2d$  &  $\lceil \log_2d\rceil n$ & $\lceil \log_2d\rceil m$  &  $2n$ \\
  				Entanglement used               & \xmark  &    \xmark   & \xmark   & \xmark   & \xmark  &\checkmark\\
  				\small{Measurement effect on others}    & \xmark   & \xmark  &   \xmark  &   \xmark &   \xmark & \checkmark  \\
  				\small{No. of Operations}        &  $5+t$ &    $t(t+1)+2n+2$  & $n+1$   & $3n$  &  $2m+3$  & 1\\
  				\small{No. of Measurements}      & $t$     &     $t$           & $n$  &     $n$   &    $m$   & $3$\\
  				\bottomrule
  	\end{tabular}}}
  \label{tab:gr}
  \end{table}
   \section{Application}
   The security of digital images is a key challenge faced by many big data analytics companies and multinationals. The digital image may be of varied importance ranging from military maps to digital signatures. The proposed method can provide perfect visual secret sharing by harnessing the Novel Enhanced Quantum Representation (NEQR)\cite{zhang2013neqr}. The image is represented by
   \begin{equation}\label{NEQR}
   \begin{aligned}
   \ket{A}&=\dfrac{1}{2^n}\sum_{i=0}^{2^n-1}\sum_{j=0}^{2^n-1}|C(i,j)\rangle\otimes |i\rangle |j\rangle\\ &=\dfrac{1}{2^n}\sum_{i=0}^{2^n-1}\sum_{j=0}^{2^n-1}|C^0(i,j)C^1(i,j)...C^{q-1}(i,j)\rangle\otimes |i\rangle |j\rangle
   \end{aligned}
   \end{equation}
   where, $C^x(i,j)\in\{0,1\}~ \forall i,j,x$. For a gray level image $C(i,j)\in \{0,1,.....255\}$, we have $q=8$. Hence, in the NEQR model, to represent a $q$ bit depth image of size of size $2^n\times 2^n$, we need a total of $q+2n$ qubits. Compactly, it can be written as \begin{equation}\label{NEQR1}
   |A\rangle=\dfrac{1}{2^n}\sum_{i=0}^{2^n-1}\sum_{j=0}^{2^n-1} \Motimes\limits_{x=0}^{q-1}|C^x(i,j)\rangle\otimes |ij\rangle
   \end{equation}
   \begin{enumerate}
   	\item To set the value of each pixel in the Quantum states by defining an operator $\tilde{U}$, as follows : \begin{equation} \tilde{U}_{xy}=I \otimes \sum_{i=0}^{2^n-1}\sum_{\substack{j=0\\ ij\neq xy}}^{2^n-1}|ij\rangle \langle ij|+\tilde{Set}_{xy}\otimes |xy\rangle \langle xy|
   	\end{equation}
   	where, $\displaystyle \tilde{Set}_{xy}|0^q\rangle~\mapsto |0\oplus \Motimes\limits_{u=0}^{q-1}\tilde{A}^u(x,y)\rangle =\Motimes _{u=0}^{q-1}|0\oplus \tilde{A}^u(x,y)\rangle$, where $\tilde{A}^u(x,y)\in \{0,1\} \forall u,x,y$.
   	\item Applying $\tilde{U}_{xy}$ on $\ket{\Phi_2}$ we set the pixel intensity for each coordinate $(x,y)$ by the equation \begin{equation}\label{NN}
   	\begin{aligned}
   	\tilde{U}_{xy}\ket{\Phi_2}&= \tilde{U}_{xy}\left(  \sum_{i=0}^{2^n-1}\sum_{\substack{j=0\\ ij\neq xy}}^{2^n-1}\ket{0}^q\otimes |ij\rangle +\ket{0}^q\otimes |xy\rangle\right)\\
   	&= \sum_{i=0}^{2^n-1}\sum_{\substack{j=0\\ ij\neq xy}}^{2^n-1}\ket{0}^q\otimes |ij\rangle +\tilde{A}(x,y)\otimes |xy\rangle
   	\end{aligned}
   	\end{equation}
   	\item The operation of $\tilde{U}_{xy}$ on the state $\ket{\Phi_2}$ sets the value of a single pixel. Therefore,the application of the operator is to be done $2^{3l}$ times to set the value for each pixel. Hence, we define an operator $\mathbf{U}$\begin{equation*}
   	\mathbf{\tilde{U}}=\prod_{i=0}^{2^n-1}\prod_{j=0}^{2^n-1}\tilde{U}_{ij}
   	\end{equation*} which acts on the state $\ket{\Phi_2}$ to  set all the values of the image and obtain the final state $\ket{\Phi_3}$.
   	\begin{equation*}\label{AA}
   	\begin{aligned}
   	\ket{\Phi_3}={}&\mathbf{\tilde{U}}\ket{\Phi_2} = \prod_{i=0}^{2^n-1}\prod_{j=0}^{2^n-1}\tilde{U}_{ij}\ket{\tilde{A_2}}\\
   	\ket{\Phi_3}=&\frac{1}{2^n}\sum_{x=0}^{2^n-1}\sum_{y=0}^{2^n-1}|\tilde{A}^0(x,y)...\tilde{A}^{q-1}(x,y) \rangle\otimes |xy\rangle\\
   	=&\frac{1}{2^n}\sum_{x=0}^{2^n-1}\sum_{j=0}^{2^n-1}|\tilde{A}(x,y)\rangle\otimes |xy\rangle
   	\end{aligned}
   	\end{equation*}where, $x=|x_0x_1....x_{n-1}\rangle, y=|y_0,y_1,....y_{n-1}\rangle$ and $\tilde{A}^t(x,y)\in\{0,1\}~\forall x,y,t$.
   \end{enumerate}
   Example 1. The dealer $D$ wishes to share the pixel at $(1,1)$ with intensity $55$, so he decides to share the state $\ket{110111}\ket{01}\ket{01}$. The state $\ket{110111}=i\mathbb{Y}_1\mathbb{X}_1\mathbb{I}_4, \ket{01}=\mathbb{Z}_4 \text{or} \mathbb{X}_1$.
   \section{Conclusion}
   The novelty of the proposed scheme is based on the reliance of pure quantum mechanical properties of photons as compared to a layer of classical cryptography implemented in other schemes. The security of the scheme against an adversary is quite substantial, and also malicious alterations to the secret is not possible. The scheme stands the test of various scenarios of data manipulations and partial information availability. The applicability of the proposed scheme for visual secret sharing is given that enhances the viability of the scheme. The prospects for the scheme are its generalization for security purposes using the $GHZ$ and $W$ state to devise a secure quantum secret sharing algorithm. The limitation that we find in our scheme is the cascading effect of the errors involved during the transmission of the quantum secret over long distances and noisy channels. We adhere to improve this concern by various quantum error mitigation techniques.
   \section*{Acknowledgement}
   One of the authors, Farhan Musanna, with grant number MHR-01-23-200-428 is grateful to Ministry of Human Resource Development (MHRD), Government of India and  Indian Institute of Technology Roorkee, for providing financial aid for this work. The authors are extremely thankful to IBM for providing access to their Quantum Experience (IBM-QE) cloud servers.

   \end{document}